\theoremstyle{plain}% Theorem-like structures provided by amsthm.sty
\newtheorem{theorem}{Theorem}[section]
\newtheorem{lemma}[theorem]{Lemma}
\newtheorem{proposition}[theorem]{Proposition}
\theoremstyle{definition}
\theoremstyle{remark}
\newcommand{\rqq}[1]{
\begin{center}
\begin{tcolorbox}[width=\columnwidth, 
 colback=gray!5!white, 
 colframe=cyan!60!black, 
%colframe=red!55!black,
%colframe=blue!55!black,
%colback=red!2!white,
%colback=blue!4!white,
%boxsep=0pt,
boxrule=0.5px,
%toprule=0pt,
left=2pt,
right=2pt,
top=2pt,
bottom=2pt,
arc=5pt,
auto outer arc]
%\textit{#1}
{#1}
\end{tcolorbox}
\end{center}
}
\title{Area-Optimal Control Strategies for Heterogeneous Multi-Agent Pursuit}
\author{
    Kamal Mammadov and Damith C. Ranasinghe
}
\begin{document}

\maketitle

\begin{abstract}
This paper presents a novel strategy for a multi-agent pursuit-evasion game involving multiple faster pursuers with heterogenous speeds and a single slower evader. We define a geometric region, the evader's safe-reachable set, as the intersection of Apollonius circles derived from each pursuer-evader pair. The capture strategy is formulated as a zero-sum game where the pursuers cooperatively minimize the area of this set, while the evader seeks to maximize it, effectively playing a game of spatial containment. By deriving the analytical gradients of the safe-reachable set's area with respect to agent positions, we obtain closed-form, instantaneous optimal control laws for the heading of each agent. These strategies are computationally efficient, allowing for real-time implementation. Simulations demonstrate that the gradient-based controls effectively steer the pursuers to systematically shrink the evader's safe region, leading to guaranteed capture. This area-minimization approach provides a clear geometric objective for cooperative capture.
\end{abstract}

\section{Introduction}
Multi-agent systems are increasingly central to modern robotics and autonomous operations, with critical applications in contested environments where interactions are inherently adversarial \cite{Wang2025, Manoharan2023}. Scenarios ranging from national security and asset protection \cite{Li20111026, Niknami2022} to logistics and surveillance rely on the coordinated actions of intelligent agents. The formal study of these dynamic conflicts falls under the purview of pursuit-evasion games, a foundational topic in control theory and mathematics \cite{Rusnak2005441}. These games model the strategic interactions between a team of pursuers aiming to capture or contain one or more evaders, who in turn seek to evade capture or achieve a separate objective \cite{Prokopov2013753, Shima2011414}. The vast literature explores a wide variety of formulations, considering factors such as limited observations \cite{Lin20151347} and cooperative team dynamics \cite{Liu20135368}.

Differential game theory provides a powerful mathematical lens for analyzing such conflicts and deriving optimal strategies for all players \cite{Garcia2021675}. This framework has been successfully applied to various multi-agent problems, including cooperative target defence \cite{Liang201958} and scenarios involving a defender aiding a pursuer \cite{Garcia2019652}. Our work leverages this paradigm by formulating the multi-pursuer game as a zero-sum pursuit-evasion game over the area of the safe-reachable set, $A_e(t)$. The agents seek to find the Nash equilibrium of the instantaneous min-max problem on the area's rate of change, $\dot{A}_e(t)$.

\rqq{The primary contribution of this study is the generalization of the area-based control framework to the challenging case of pursuers with heterogeneous speeds. We provide a complete problem formulation and a proof for the area-optimal control strategy for the pursuit-evasion game.}

This is achieved through the novel analytical derivation of the gradients of the safe set's area with respect to each agent's position ($\nabla_{\mathbf{p}_i} A_e$ and $\nabla_{\mathbf{e}} A_e$). Critically, this derivation is accomplished without a closed-form expression for the area itself by applying the Leibniz integral rule to the complex geometry formed by the intersection of Apollonius discs. This result provides a clear and geometrically intuitive interpretation of the optimal Nash equilibrium strategies: each pursuer should move directly toward the centroid of the boundary arc it contributes to the safe set, while the evader's optimal heading is a weighted sum of vectors pointing from its position to the centroids of all active boundary arcs. This provides a computationally tractable, optimal, and decentralized control law for a broad and practical class of pursuit-evasion scenarios.

\section{Related Work}\label{sec:related_work}
Extensive research works on pursuit-evasion and target defence can be broadly categorized into three main areas: i)~game-theoretic analyses that seek optimal strategies, ii)~geometric and control-theoretic methods that focus on robust containment, and iii)~approaches that incorporate real-world complexities such as partial information and environmental obstacles.

Differential game theory is the classical tool for analyzing multi-agent conflicts. A significant portion of this literature focuses on the Target-Attacker-Defender (TAD) game, where optimal state-feedback strategies are derived to determine the outcome \cite{Garcia2019553}. These solutions often involve constructing a ``Barrier" surface that partitions the state space into winning and losing regions for each team \cite{Garcia201868, Mammadov2022120}. Recent work has sought to unify these solutions under more general frameworks, such as the ``Holographic Principle" \cite{Mammadov20232725}, and to derive state-feedback Nash equilibria for games in arbitrary dimensions \cite{Mammadov20201}. Other geometric approaches have been used to find optimal strategies for cooperative defence scenarios \cite{Mammadov20212615}.

An alternative line of work focuses on controlling geometric properties of the system to guarantee capture. This includes methods based on encirclement using robust model predictive control \cite{Wang20211473} and cooperative blocking strategies \cite{Garcia2021360}. Barrier construction has also been explored in related problems like the ``fishing game" \cite{Zha20171409}. Of particular relevance to our work is the area-based control framework proposed by \cite{Shah2019451}. They developed a cooperative policy for pursuers to contain and capture an evader by minimizing the area of the evader's reachable set. Their approach, however, was limited to the case where all pursuers have the same speed as the evader, resulting in a safe set defined by the intersection of half-planes. Our work directly generalizes this seminal contribution by tackling the more complex geometry of Apollonius discs that arises from heterogeneous pursuer speeds. By analytically deriving the area gradients for this more general case, we extend the applicability of area-based control while retaining its elegance and providing formal optimality guarantees.

A third stream of research incorporates more realistic operational constraints. This includes the presence of obstacles, which can fundamentally alter agent strategies and require visibility-based objectives \cite{Bhattacharya20161071}. Another major focus is on games with partial or asymmetric information, where agents operate with uncertainty about their opponents' states \cite{Shishika20218111, Xu20241289, Zhang2021}. These problems often involve sequential decision-making, such as defending against intruders that arrive over time \cite{Pourghorban20231673, Pourghorban2023}, or multi-stage objectives as seen in games like Capture-the-Flag \cite{Garcia20184167}. Our work focuses on the fundamental deterministic, full-information game, providing a foundational optimal solution upon which such complexities can be layered in future research. By providing an analytical and decentralized control law, our framework serves as a critical building block for these more complex scenarios.

\section{Problem Formulation}

We consider a pursuit-evasion game in a two-dimensional planar environment, $\mathbb{R}^2$, involving $N$ pursuers and a single evader. Let $\mathbf{p}_i(t) \in \mathbb{R}^2$ denote the position of pursuer $i$ at time $t$, for $i=1, \dots, N$, and let $\mathbf{e}(t) \in \mathbb{R}^2$ denote the position of the evader. All agents are modeled as simple kinematic point-masses with fixed maximum speeds and instantaneous control over their heading. The state equations governing the agents' motion are given by the following set of first-order ordinary differential equations:
\begin{subequations}
\label{eq:agent_kinematics}
\begin{align}
    \dot{\mathbf{p}}_i(t) &= V_i \mathbf{u}_i(t), \qquad \|\mathbf{u}_i(t)\| \le 1, \label{eq:pursuer_kinematics} \\
    \dot{\mathbf{e}}(t) &= V_e \mathbf{v}(t), \qquad \|\mathbf{v}(t)\| \le 1. \label{eq:evader_kinematics}
\end{align}
\end{subequations}
Here, $V_i$ is the constant maximum speed of pursuer $i$, and $V_e$ is the constant maximum speed of the evader. The vectors $\mathbf{u}_i(t) \in \mathbb{R}^2$ and $\mathbf{v}(t) \in \mathbb{R}^2$ are the unit-norm control inputs representing the instantaneous heading directions for pursuer $i$ and the evader, respectively. A critical assumption for capture to be feasible is that every pursuer is faster than the evader. That is, we assume $V_i > V_e$ for all $i \in \{1, \dots, N\}$. We introduce the speed ratio $\alpha_{i} := V_{e}/V_{i}\in(0,1)$ to quantize the speed advantage of pursuer $i$ relative to the evader. The pursuers' objective is to cooperate to capture the evader, while the evader's objective is to take evasive action to avoid capture.\footnote{The safe-reachable set defines the optimal pursuit strategy, as opposed to minimum capture time \cite{von2019multi}.}

To quantify the state of the game from a geometric perspective, we first analyze the one-on-one engagement between a single pursuer $i$ and the evader. The key question is: which points in the plane can the evader reach before, or at the same time as, pursuer $i$? The locus of such points defines a region of temporary safety for the evader with respect to that specific pursuer. Due to the constant speed assumption ($V_i > V_e$), this region is a circle, known as the Apollonius disc. Any point inside this disc is reachable by the evader strictly before the pursuer.

The center $\mathbf{c}_i(t)$ and radius $r_i(t)$ of the Apollonius disc for the pursuer-evader pair $(i,e)$ are functions of their current positions and speeds. They are given by:
\begin{subequations}
\label{eq:apollonius_i}
\begin{align}
    \mathbf{c}_i(t) &= \frac{V_i^2}{V_i^2 - V_e^2}\mathbf{e}(t) + \frac{V_e^2}{V_e^2 - V_i^2}\mathbf{p}_i(t), \label{eq:ap_circle_center} \\
    r_i(t) &= \frac{V_i V_e}{|V_i^2 - V_e^2|} \|\mathbf{e}(t) - \mathbf{p}_i(t)\|. \label{eq:ap_circle_radius}
\end{align}
\end{subequations}

While the Apollonius disc defines safety against a single pursuer, the evader must be safe from all $N$ pursuers simultaneously. Therefore, the true region of safety for the evader is the set of all points that are inside every individual Apollonius disc. We define this region as the evader's instantaneous \textit{safe-reachable set}, denoted $\mathcal{S}_e(t)$. It is formally defined as the intersection of the $N$ Apollonius discs:
\begin{equation}
\label{eq:srs_def}
\mathcal{S}_e(t) = \bigcap_{i=1}^{N} \Bigl\{\mathbf{q} \in \mathbb{R}^2 \;\bigl|\; \|\mathbf{q} - \mathbf{c}_i(t)\| \le r_i(t)\Bigr\}.
\end{equation}
This convex set represents the evader's instantaneous `safe zone', containing all locations the evader can reach before any pursuer can intercept it. If this set shrinks to a single point or becomes empty, capture is imminent. Thus the collective goal for the pursuer team is to cooperatively manoeuvrer to shrink this set. We quantify this objective using the area of the set, denoted by $A_e(t)$:
\begin{equation}
\label{eq:area_def}
A_e(t) = \operatorname{Area}\bigl(\mathcal{S}_e(t)\bigr).
\end{equation}
As $A_e(t) \to 0$, the evader is left with no safe space, and capture becomes inevitable. The game is framed around the reduction of this area.

\section{Area-Based Optimal Control Strategy}
The core of our proposed guidance law is to actively control the area of the evader's safe-reachable set, $A_e(t)$. To develop a strategy that systematically reduces this area, we must first analyze its rate of change. The time derivative of the area, $\dot{A}_e(t)$, can be found by applying the chain rule, as the area is an implicit function of the positions of all agents, which are themselves functions of time.

First, we define the gradients of the area with respect to the agent positions. Let $\mathbf{F}_{\mathbf{p}_i}(t)$ and $\mathbf{F}_{\mathbf{e}}(t)$ be the gradients with respect to the position of pursuer $i$ and the evader, respectively:
\begin{subequations}
\label{eq:flux_terms}
\begin{align}
    \mathbf{F}_{\mathbf{p}_i}(t) &:= \nabla_{\mathbf{p}_i} A_e(t), \\
    \mathbf{F}_{\mathbf{e}}(t) &:= \nabla_{\mathbf{e}} A_e(t).
\end{align}
\end{subequations}
Using these gradient vectors, the total time derivative of the area is given by the sum of the inner products of each gradient with the corresponding agent's velocity:
\begin{equation}
    \dot{A}_e(t) = \sum_{i=1}^{N} \mathbf{F}_{\mathbf{p}_i}(t)^{\top} \dot{\mathbf{p}}_i(t) + \mathbf{F}_{\mathbf{e}}(t)^{\top} \dot{\mathbf{e}}(t).
\end{equation}
Substituting the agent kinematics from Eq.~\eqref{eq:agent_kinematics}, we arrive at an expression for the rate of change of the area in terms of the agents' control inputs:
\begin{equation}
\label{eq:area_derivative}
    \dot{A}_e(t) = \sum_{i=1}^{N} V_i \mathbf{F}_{\mathbf{p}_i}(t)^{\top} \mathbf{u}_i(t) + V_e \mathbf{F}_{\mathbf{e}}(t)^{\top} \mathbf{v}(t).
\end{equation}
This equation is fundamental to our control strategy, as it directly links the actions of the agents ($\mathbf{u}_i, \mathbf{v}$) to the change in the area of the safe-reachable set.

The control problem can now be framed as a zero-sum game played over the rate of change of the area, $\dot{A}_e(t)$. The two opposing sides have conflicting objectives. The team of pursuers collectively chooses their heading controls, $\{\mathbf{u}_i\}_{i=1}^N$, to make $\dot{A}_e(t)$ as negative as possible, thereby shrinking the safe-reachable set at the maximum possible rate. Simultaneously, the evader chooses its heading control, $\mathbf{v}$, to make $\dot{A}_e(t)$ as large as possible, aiming to expand or maintain its safe zone. This instantaneous, greedy approach leads to the following min-max optimization problem:
\begin{equation}
\label{eq:minmax_game}
\min_{\{\mathbf{u}_i\}} \max_{\mathbf{v}} \dot{A}_e(t).
\end{equation}
The solution to this game provides the instantaneous optimal heading controls for all agents.

The optimal solution to the min-max game in Eq.~\eqref{eq:minmax_game} can be found by recognizing that the pursuers' and evader's controls are decoupled. Each agent seeks to optimize its own term in the sum for $\dot{A}_e(t)$. The optimal strategy to minimize (or maximize) the inner product of a fixed vector (the gradient), and a unit control vector, is to align the control vector with the negative (or positive) direction of the fixed vector. This principle gives rise to the closed-form optimal control laws.

\setcounter{section}{1}
\begin{lemma}\label{Nash_equilibrium}
Given the area gradients $\mathbf{F}_{\mathbf{p}_i}(t)$ and $\mathbf{F}_{\mathbf{e}}(t)$, the optimal headings $\mathbf{u}_i^\star(t)$ and $\mathbf{v}^\star(t)$ that solve the min-max game \eqref{eq:minmax_game} are given by:
\begin{subequations}
\label{eq:opt_controls}
\begin{align}
    \mathbf{u}_i^\star(t) &= -\frac{\mathbf{F}_{\mathbf{p}_i}(t)}{\|\mathbf{F}_{\mathbf{p}_i}(t)\|}, \quad i=1,\dots,N, \label{eq:opt_ui} \\[2mm]
    \mathbf{v}^\star(t) &= \frac{\mathbf{F}_{\mathbf{e}}(t)}{\|\mathbf{F}_{\mathbf{e}}(t)\|}. \label{eq:opt_ve}
\end{align}
\end{subequations}
\end{lemma}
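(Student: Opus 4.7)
The plan is to exploit the additively separable structure of $\dot{A}_e(t)$ in Eq.~\eqref{eq:area_derivative}: each pursuer control $\mathbf{u}_i$ appears only in its own term $V_i \mathbf{F}_{\mathbf{p}_i}^\top \mathbf{u}_i$, and $\mathbf{v}$ appears only in $V_e \mathbf{F}_{\mathbf{e}}^\top \mathbf{v}$. This decouples the coupled min-max problem into $N+1$ independent linear optimizations of a fixed vector against a unit-norm control. My first step is to invoke the elementary Cauchy--Schwarz bound: for any fixed $\mathbf{F}\in\mathbb{R}^2$ with $\mathbf{F}\ne\mathbf{0}$ and any $\mathbf{w}$ with $\|\mathbf{w}\|\le 1$, one has $-\|\mathbf{F}\| \le \mathbf{F}^\top\mathbf{w} \le \|\mathbf{F}\|$, with the lower bound attained uniquely at $\mathbf{w}=-\mathbf{F}/\|\mathbf{F}\|$ and the upper bound uniquely at $\mathbf{w}=\mathbf{F}/\|\mathbf{F}\|$.

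Applying this termwise, I would first fix $\mathbf{v}$ arbitrarily and, for each $i$, minimize $V_i\mathbf{F}_{\mathbf{p}_i}^\top\mathbf{u}_i$ over $\|\mathbf{u}_i\|\le 1$, obtaining the stated $\mathbf{u}_i^\star$ with minimum value $-V_i\|\mathbf{F}_{\mathbf{p}_i}\|$. Then, with the pursuers' controls fixed, I would maximize $V_e\mathbf{F}_{\mathbf{e}}^\top\mathbf{v}$ over $\|\mathbf{v}\|\le 1$ to obtain the stated $\mathbf{v}^\star$ with maximum value $+V_e\|\mathbf{F}_{\mathbf{e}}\|$. Crucially, the optimizer of each subproblem does not depend on the other players' choices, so the same pair $(\{\mathbf{u}_i^\star\},\mathbf{v}^\star)$ attains both $\min\max$ and $\max\min$; this certifies a saddle point with common value $-\sum_{i=1}^{N}V_i\|\mathbf{F}_{\mathbf{p}_i}\|+V_e\|\mathbf{F}_{\mathbf{e}}\|$, and therefore no unilateral deviation can improve a deviating player's own objective, which is precisely the Nash property.

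The step requiring care, and really the only genuine obstacle, is degeneracy: when some $\mathbf{F}_{\mathbf{p}_i}=\mathbf{0}$ or $\mathbf{F}_{\mathbf{e}}=\mathbf{0}$, the normalizations in \eqref{eq:opt_ui}--\eqref{eq:opt_ve} become $0/0$. I would dispose of this by observing that in that case the corresponding inner product vanishes identically over the unit ball, so every admissible heading is optimal and the equilibrium value and the lemma's conclusion are unaffected; by convention one may pick an arbitrary unit vector (for instance, the heading used at the previous instant). Beyond this minor caveat, the argument is essentially Cauchy--Schwarz applied $N+1$ times, and the substantive content of the lemma is the decoupling already exposed by Eq.~\eqref{eq:area_derivative}.
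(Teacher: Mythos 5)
Your proposal is correct and follows essentially the same argument as the paper: exploit the additive separability of $\dot{A}_e(t)$ in Eq.~\eqref{eq:area_derivative} and optimize each inner-product term independently over the unit ball, aligning each pursuer's heading with $-\mathbf{F}_{\mathbf{p}_i}$ and the evader's with $+\mathbf{F}_{\mathbf{e}}$. Your explicit treatment of the degenerate case $\mathbf{F}_{\mathbf{p}_i}=\mathbf{0}$ or $\mathbf{F}_{\mathbf{e}}=\mathbf{0}$ and the observation that decoupling yields $\min\max=\max\min$ are welcome refinements the paper leaves implicit, but they do not change the route.
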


\begin{proof}
The expression for $\dot{A}_e(t)$ in Eq.~\eqref{eq:area_derivative} is a sum of terms, where each pursuer's control $\mathbf{u}_i$ and the evader's control $\mathbf{v}$ appear independently. To solve the min-max problem, we can optimize each term separately.

For each pursuer $i$, the objective is to minimize its contribution to the sum, which is the term $V_i \mathbf{F}_{\mathbf{p}_i}(t)^{\top} \mathbf{u}_i(t)$. Since $V_i > 0$, this is equivalent to minimizing the inner product $\mathbf{F}_{\mathbf{p}_i}(t)^{\top} \mathbf{u}_i(t)$. The inner product between two vectors is minimized when they point in opposite directions. Therefore, the optimal unit-norm control for pursuer $i$ is to align its heading with the negative gradient direction, yielding $\mathbf{u}_i^\star(t) = -\mathbf{F}_{\mathbf{p}_i}(t) / \|\mathbf{F}_{\mathbf{p}_i}(t)\|$.

Conversely, the evader's objective is to maximize its contribution, $V_e \mathbf{F}_{\mathbf{e}}(t)^{\top} \mathbf{v}(t)$. This is equivalent to maximizing the inner product $\mathbf{F}_{\mathbf{e}}(t)^{\top} \mathbf{v}(t)$. The inner product is maximized when the two vectors are aligned. Thus, the optimal unit-norm control for the evader is to align its heading with the gradient direction, $\mathbf{v}^\star(t) = \mathbf{F}_{\mathbf{e}}(t) / \|\mathbf{F}_{\mathbf{e}}(t)\|$.
\end{proof}

When all agents employ their optimal strategies as defined in Lemma \ref{Nash_equilibrium}, we can determine the resulting dynamics of the area $A_e(t)$. Substituting the optimal controls from Eq.~\eqref{eq:opt_controls} back into the expression for $\dot{A}_e(t)$ in Eq.~\eqref{eq:area_derivative} yields the time evolution of the area under the optimal policy:
{\allowdisplaybreaks
\begin{align*}
\dot{A}_e(t) &= \sum_{i=1}^{N} V_i \mathbf{F}_{\mathbf{p}_i}(t)^{\top} \mathbf{u}_i^\star(t) + V_e \mathbf{F}_{\mathbf{e}}(t)^{\top} \mathbf{v}^\star(t) \nonumber \\
&= \sum_{i=1}^{N} V_i \mathbf{F}_{\mathbf{p}_i}(t)^{\top} \left(-\frac{\mathbf{F}_{\mathbf{p}_i}(t)}{\|\mathbf{F}_{\mathbf{p}_i}(t)\|}\right)\\ &+ V_e \mathbf{F}_{\mathbf{e}}(t)^{\top} \left(\frac{\mathbf{F}_{\mathbf{e}}(t)}{\|\mathbf{F}_{\mathbf{e}}(t)\|}\right) \nonumber \\
&= V_e \|\mathbf{F}_{\mathbf{e}}(t)\| - \sum_{i=1}^{N} V_i \|\mathbf{F}_{\mathbf{p}_i}(t)\|. 
\tag{\stepcounter{equation}\theequation}\\\label{eq:optimal_area_rate}
\end{align*}
}
This elegant result reveals that the game becomes a direct competition between the weighted magnitudes of the area gradients. The area increases based on the evader's ability to move in its most effective direction (scaled by $\|\mathbf{F}_{\mathbf{e}}\|$), and decreases based on the pursuers' collective ability to move in their most effective directions (scaled by the sum of $V_i\|\mathbf{F}_{\mathbf{p}_i}\|$). The sign of $\dot{A}_e(t)$ is determined by which side can generate a larger aggregate ``force" on the area's boundary.

The framework presented thus far provides a clear Nash equilibrium strategy, contingent on one crucial component: the area gradients themselves. The main remaining challenge is to compute the vectors $\nabla_{\mathbf{p}_i} A_e(t)$ and $\nabla_{\mathbf{e}} A_e(t)$. Because the area $A_e(t)$ is a complex geometric quantity, the intersection of multiple discs, its gradients are not trivial to compute. The following section is dedicated to the analytical derivation of these gradients, which is the final piece required to make the control strategy fully implementable.

\section{Analytical Gradient Derivation}
The optimal control strategy derived in the previous section is entirely dependent on the area gradients, $\nabla_{\mathbf{p}_i} A_e(t)$ and $\nabla_{\mathbf{e}} A_e(t)$. This section presents the main technical contribution of this paper: the analytical derivation of these gradients. A standard calculus-based approach would first require deriving a closed-form analytical expression for the area $A_e(t)$ as a function of the centers and radii of the constituent Apollonius discs. Subsequently, one would compute the partial derivatives of this area function with respect to each center and radius, and finally apply the multivariable chain rule to find the gradients with respect to the agent positions $\mathbf{p}_i$ and $\mathbf{e}$. However, deriving an analytical formula for the intersection area of an arbitrary number of discs is an extremely sophisticated geometric problem, yielding highly complex expressions, if they are achievable at all. Consequently, differentiating such functions to find the gradients becomes essentially intractable for $N>2$.

Fortunately, an alternative and more elegant path exists. The related prior publication \cite{Shah2019451} demonstrated the application of the Leibniz integral rule (also known as Reynolds transport theorem) to derive these gradients without requiring a closed-form expression for the area itself. To understand this intuitively, we note that the optimal policy only depends on the instantaneous rate of change of the area, $\dot{A}_e(t)$. Therefore, it is not necessary to know the value of $A_e(t)$; it suffices to know how the boundary of the safe-reachable set $\mathcal{S}_e(t)$ moves in response to changes in the agents' positions. The Leibniz integral rule provides precisely this: an analytical method for determining the rate of change of an area by integrating the normal velocity of its boundary, without needing an upfront formula for the area.

In this section, we apply a coordinate-independent formulation of the Leibniz integral rule. The key insight is that the change in area can be computed by analyzing the derivatives of the level set functions that implicitly define the boundaries of the Apollonius discs. This powerful technique allows us to bypass not only the need for an area formula but also the need to explicitly parameterize the boundary of the intersection region $\mathcal{S}_e(t)$. Calculating the derivatives of the level set functions with respect to the agent positions is sufficient to determine the area gradients.

\subsection{Boundary of the Safe-Reachable Set}
Since the change in the area $A_e(t)$ is related to the movement of its boundary, the first step in our derivation is to obtain an explicit parametrization of this boundary, denoted $\partial\mathcal{S}_e(t)$. The safe-reachable set $\mathcal{S}_e(t)$ is a convex region formed by the intersection of $N$ Apollonius discs. Its boundary, therefore, is composed of a sequence of circular arcs, where each arc belongs to one of the Apollonius circles. An arc from a given circle contributes to the boundary only if it lies inside all other discs. The following analysis will formalize this geometric construction.

Let $\mathcal{B}_i(t)$ denote the closed Apollonius disc for pursuer $i$ at time $t$, defined as $\mathcal{B}_i(t) = \{\mathbf{q} \in \mathbb{R}^2 \mid \|\mathbf{q} - \mathbf{c}_i(t)\| \le r_i(t)\}$. The safe-reachable set from Eq.~\eqref{eq:srs_def} is then the intersection of these $N$ discs:
\begin{equation}
    \mathcal{S}_e(t) = \bigcap_{i=1}^{N} \mathcal{B}_i(t).
\end{equation}
By construction, each disc $\mathcal{B}_i(t)$ is a compact and convex set. Since the intersection of any number of compact and convex sets is also compact and convex, $\mathcal{S}_e(t)$ is a non-empty (by assumption of a non-trivial game state) compact convex subset of $\mathbb{R}^2$.

A preliminary reduction simplifies the problem by eliminating discs whose boundaries cannot contribute to $\partial\mathcal{S}_e(t)$. If a disc $\mathcal{B}_i(t)$ wholly contains another disc $\mathcal{B}_j(t)$, then $\mathcal{B}_i(t)$'s boundary lies entirely outside $\mathcal{B}_j(t)$ and thus cannot form part of $\partial\mathcal{S}_e(t)$. We therefore restrict attention to discs that do not fully contain any other single disc. Define the active index set $I(t)$ as:
%\begin{equation}
\begin{align*}
I(t) =\\ &\Bigl\{ i \in \{1, \dots, N\} \mid \|\mathbf{c}_i(t) - \mathbf{c}_j(t)\| \\&\geq r_i(t) - r_j(t) \quad \forall j \neq i \Bigr\}.
%\end{equation}
\end{align*}
The condition $|\mathbf{c}_i(t) - \mathbf{c}_j(t)| + r_j(t) \geq r_i(t)$ ensures $\mathcal{B}_i(t)$ does not fully contain any other disc $\mathcal{B}_j(t)$. Consequently, the boundary of the safe-reachable set is composed entirely of arcs from the circles $\{\partial\mathcal{B}_i(t) \mid i \in I(t)\}$. All subsequent analysis considers only discs in this active set.

To find the specific arcs forming the boundary, we must determine, for each active circle $\partial\mathcal{B}_i(t)$, which parts are intersected by all other active discs $\mathcal{B}_j(t)$. This leads to the following lemma, which provides an explicit algorithm for computing the angular range of each contributing arc.

\begin{lemma}[Boundary Arc Computation]\label{lemma_Boundary_Arc_Computation}
For each active circle $i \in I(t)$, the set of angles $\Theta_i(t)$ corresponding to the arc on $\partial\mathcal{B}_i(t)$ that forms part of $\partial\mathcal{S}_e(t)$ is a single compact interval $[\theta_i^{\min}, \theta_i^{\max}]$ or the empty set. This interval is computed via the following steps:
\begin{enumerate}
    \item For each other active circle $j \in I(t)$ where $j \neq i$, calculate the pairwise constraint interval $\Gamma_{ij}(t) = [\phi_{ij} - \alpha_{ij}, \phi_{ij} + \alpha_{ij}]$ (mod $2\pi$), where the angles are given by:
    \begin{align}
        \phi_{ij} &\text{ is the angle of the vector from $\mathbf{c}_i(t)$ to $\mathbf{c}_j(t)$}, \\
        \alpha_{ij} &= \arccos\left(\frac{\|\mathbf{c}_j(t) - \mathbf{c}_i(t)\|^2 + r_i(t)^2 - r_j(t)^2}{2 r_i(t) \|\mathbf{c}_j(t) - \mathbf{c}_i(t)\|}\right).
    \end{align}
    \item The final angular range for circle $i$ is the intersection of all such constraint intervals:
    \begin{equation}
        \Theta_i(t) = \bigcap_{j \in I(t), j \neq i} \Gamma_{ij}(t).
    \end{equation}
\end{enumerate}
If $\Theta_i(t)$ is non-empty, its endpoints define $\theta_i^{\min}$ and $\theta_i^{\max}$.
\end{lemma}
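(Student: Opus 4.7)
The plan is to derive the arc of $\partial\mathcal{B}_i(t)$ lying on $\partial\mathcal{S}_e(t)$ by reducing to pairwise disc-membership conditions on a parameterized circle, and then arguing connectedness of the resulting intersection using the convex structure of $\mathcal{S}_e(t)$.

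First I would parameterize $\partial\mathcal{B}_i(t)$ as $\mathbf{q}_i(\theta) = \mathbf{c}_i(t) + r_i(t)(\cos\theta,\sin\theta)$ for $\theta \in [0,2\pi)$. Because $\mathcal{S}_e(t) = \bigcap_k \mathcal{B}_k(t)$, the point $\mathbf{q}_i(\theta)$ belongs to $\partial\mathcal{S}_e(t)$ precisely when $\mathbf{q}_i(\theta) \in \mathcal{B}_j(t)$ for every $j \in I(t)\setminus\{i\}$, so the question of which angles lie in $\Theta_i(t)$ reduces to independent pairwise tests whose intersection determines the answer.

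Next, for each fixed $j\neq i$ I would expand $\|\mathbf{q}_i(\theta) - \mathbf{c}_j(t)\|^2 \le r_j(t)^2$ and use the identity $(\mathbf{c}_j - \mathbf{c}_i)^\top(\cos\theta,\sin\theta) = d_{ij}\cos(\theta - \phi_{ij})$ with $d_{ij} = \|\mathbf{c}_j - \mathbf{c}_i\|$. Routine algebra reduces the membership condition to the scalar inequality $\cos(\theta - \phi_{ij}) \ge (d_{ij}^2 + r_i^2 - r_j^2)/(2 r_i d_{ij})$, which is exactly the law of cosines at vertex $\mathbf{c}_i$ in the triangle with vertices $\mathbf{c}_i$, $\mathbf{c}_j$, and a point of $\partial\mathcal{B}_i \cap \partial\mathcal{B}_j$. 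Inverting gives $\theta - \phi_{ij} \in [-\alpha_{ij},\alpha_{ij}]$ modulo $2\pi$, which is the stated interval $\Gamma_{ij}(t)$; the restriction $i \in I(t)$ keeps the right-hand side in $[-1,1]$, so $\alpha_{ij}$ is well-defined. Aggregating over $j$ gives $\Theta_i(t) = \bigcap_{j\in I(t),\,j\neq i} \Gamma_{ij}(t)$, matching step~2 of the lemma's algorithm.

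The final step is to show that this intersection is a single compact arc rather than a disjoint union of arcs. For this I would leverage the convexity of $\mathcal{S}_e(t)$: since $\mathcal{S}_e(t)$ is compact and convex with non-empty interior (the evader $\mathbf{e}(t)$ lies strictly inside every $\mathcal{B}_k(t)$ because $\alpha_k<1$), its boundary is a Jordan curve traversed by arcs of the circles $\partial\mathcal{B}_k(t)$; a local analysis at each corner where $\partial\mathcal{B}_i(t)$ meets a neighboring $\partial\mathcal{B}_k(t)$, combined with the fact that $\mathcal{S}_e(t)$ lies on the inward side of $\partial\mathcal{B}_i(t)$ (toward $\mathbf{c}_i$), should prevent $\partial\mathcal{B}_i(t)$ from re-entering $\partial\mathcal{S}_e(t)$ after once leaving it. Translated back to the angular parameter this yields the single compact interval $[\theta_i^{\min},\theta_i^{\max}]$. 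This connectedness step is the \emph{main obstacle}: an intersection of several arcs on a circle is not in general connected --- two arcs each longer than a semicircle can meet in two disjoint pieces --- so the single-interval conclusion is not a purely combinatorial fact about the $\Gamma_{ij}$. Making it rigorous requires exploiting the specific Apollonius geometry, most naturally by taking $\mathbf{e}(t)$ as a canonical interior point about which $\partial\mathcal{S}_e(t)$ admits a star-shaped, monotone-angle parametrization; formalizing this star-shaped argument is the delicate part of the proof.
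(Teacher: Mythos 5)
Your first two steps (the parametrization $\mathbf{q}_i(\theta)=\mathbf{c}_i+r_i(\cos\theta,\sin\theta)$, the reduction of $\mathbf{q}_i(\theta)\in\mathcal{B}_j$ to $\cos(\theta-\phi_{ij})\ge(d_{ij}^2+r_i^2-r_j^2)/(2r_i d_{ij})$, and the intersection over $j$) coincide exactly with the paper's proof. Where you diverge is the connectedness claim, and your instinct there is sharper than the paper's: the paper disposes of this step by asserting that $\Theta_i$ is compact and ``therefore'' a single interval, which is a non-sequitur, whereas you correctly observe that an intersection of circular arcs wider than a semicircle can be disconnected, so some genuinely geometric input is required.

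However, the convexity/star-shapedness repair you sketch cannot be completed, because the single-interval claim is in fact false for intersections of Apollonius discs. Take $\mathcal{B}_1$ with center $(0,0)$ and $r_1=1$, and $\mathcal{B}_2$, $\mathcal{B}_3$ with centers $(-2,0)$ and $(2,0)$ and radii $r_2=r_3=2.9$. No disc contains another, all pairwise intersections are proper, and the three discs share an interior point, e.g.\ $\mathbf{e}=(0,0.1)$; setting $\alpha_i=\|\mathbf{e}-\mathbf{c}_i\|/r_i\in(0,1)$ and $\mathbf{p}_i=\bigl(\mathbf{e}-(1-\alpha_i^2)\mathbf{c}_i\bigr)/\alpha_i^2$ realizes this as a legitimate Apollonius configuration with all indices active. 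A point $(\cos\theta,\sin\theta)$ of $\partial\mathcal{B}_1$ lies in $\mathcal{B}_2\cap\mathcal{B}_3$ iff $|\cos\theta|\le 0.8525$, so $\Theta_1$ consists of two disjoint arcs (approximately $[31.5^\circ,148.5^\circ]$ and $[211.5^\circ,328.5^\circ]$): circle $1$ contributes the top and the bottom of $\partial\mathcal{S}_e$, while circles $2$ and $3$ contribute the right and left caps. Convexity does not rescue the argument --- $\partial\mathcal{S}_e$ is indeed a Jordan curve admitting a monotone angular parametrization about $\mathbf{e}$, but the portion belonging to a single circle may occupy two separate angular sectors as seen from $\mathbf{e}$, so $\partial\mathcal{B}_1$ does re-enter $\partial\mathcal{S}_e$ after leaving it. The correct conclusion of your steps 1--2 is only that $\Theta_i$ is a finite union of closed arcs; the downstream Leibniz integrals remain valid if one sums over the connected components of each $\Theta_i$, but the single-interval statement of Lemma~\ref{lemma_Boundary_Arc_Computation}, and with it the single-arc, single-centroid form of Theorem~\ref{thm:area_gradients}, must be weakened accordingly.
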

Proof contained in the Technical Appendix. This lemma provides a complete algorithm to compute the boundary arcs of the safe-reachable set at any time $t$. For notational simplicity, the index set $I(t)$ will henceforth refer exclusively to circles that contribute a non-empty arc to the boundary, meaning those indices $i$ for which $\Theta_i(t) \neq \varnothing$.

\begin{figure}[t]
\centering
\includegraphics[width=1\columnwidth]{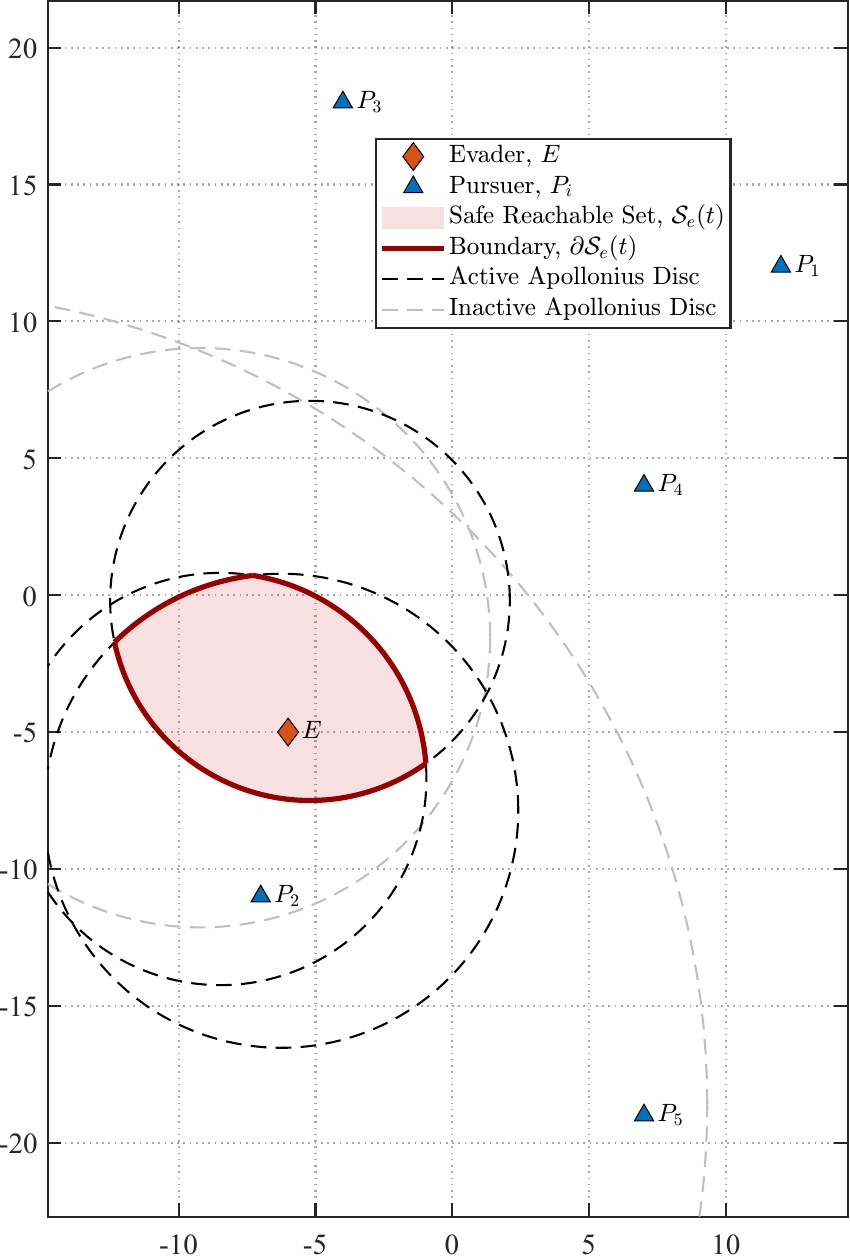}
\caption{Visualizing the Safe-Reachable Set $\mathcal{S}_e(t)$ for the example scenario with five pursuers and a single evader. The evader's speed is $V_e = 4$, while the pursuer speeds are given by the vector $V = [6, 6, 12, 10, 9]$. Only pursuers 2, 3 and 4 are active and contribute to the boundary.}
\label{fig:safe_set}
\end{figure}

\subsection{Leibniz Integral Rule}

With the boundary of the safe-reachable set $\mathcal{S}_e(t)$ explicitly parameterized, we now turn to the core task of computing its area gradients. As motivated previously, we use the Leibniz integral rule to find the derivative of the area with respect to the agent positions without requiring a closed-form expression for the area itself. This is possible because the change in area of a moving region is determined by the normal velocity of its boundary.

First, we define each Apollonius disc $\mathcal{B}_i(t)$ implicitly using a level-set function, $F_i: \mathbb{R}^2 \to \mathbb{R}$, such that:
\begin{equation}
    \mathcal{B}_i(t) = \bigl\{\mathbf{q} \in \mathbb{R}^2 \mid F_i(\mathbf{q}, t) \le 0 \bigr\},
\end{equation}
where the level-set function is given by:
\begin{equation}
    F_i(\mathbf{q}, t) = \|\mathbf{q} - \mathbf{c}_i(t)\|^2 - r_i(t)^2.
\end{equation}
The boundary of the disc, $\partial\mathcal{B}_i(t)$, is the zero level-set, $F_i(\mathbf{q}, t) = 0$.

The Leibniz integral rule states that the derivative of the area of the region $\mathcal{S}_e(t)$ with respect to any parameter $\zeta$ (such as a component of an agent's position vector) is given by an integral over its boundary $\partial\mathcal{S}_e(t)$:
\begin{equation}
    \frac{\partial A_e(t)}{\partial \zeta} = \int_{\partial\mathcal{S}_e(t)} V_n(\mathbf{q}, \zeta) \,ds,
\end{equation}
where $V_n$ is the outward normal velocity of the boundary at point $\mathbf{q}$ induced by a change in the parameter $\zeta$, and $ds$ is the element of arc length. The key insight from \cite{Shah2019451} is that this normal velocity can be expressed in terms of the derivatives of the level-set function that defines the boundary at that point. If a point $\mathbf{q}$ on the boundary belongs to the arc from circle $i$, its normal velocity is given by:
\begin{equation}
    V_n(\mathbf{q}, \zeta) = -\frac{\partial F_i / \partial \zeta}{\|\nabla_{\mathbf{q}} F_i\|}.
\end{equation}
This formula allows us to compute the change in area by integrating the effects of parameter changes on the implicit functions that define the boundary, a much more tractable approach than direct differentiation of a complex area formula.

To apply the Leibniz rule, we must compute the gradients of $F_i(\mathbf{q}, t)$ with respect to the spatial variable $\mathbf{q}$ and the agent positions $\{\mathbf{p}_k\}_{k=1}^N$ and $\mathbf{e}$. The function $F_i$ depends on the agent positions implicitly through the Apollonius center $\mathbf{c}_i$ and radius $r_i$. 
\begin{proposition}[Level-Set Gradients]\label{prop:level_set_gradients}
The gradients of the level-set function $F_i(\mathbf{q},t)$ with respect to the spatial variable $\mathbf{q}$ and the agent positions are given by:
\begin{align}
    \nabla_{\mathbf{q}} F_i(\mathbf{q},t) &= 2(\mathbf{q} - \mathbf{c}_i), \\
    \nabla_{\mathbf{p}_i} F_i(\mathbf{q},t) &= \frac{2\alpha_i^2}{1-\alpha_i^2}(\mathbf{q} - \mathbf{p}_i), \\
    \nabla_{\mathbf{e}} F_i(\mathbf{q},t) &= \frac{2}{1-\alpha_i^2}(\mathbf{e} - \mathbf{q}), \\
    \nabla_{\mathbf{p}_k} F_i(\mathbf{q},t) &= \mathbf{0}, \quad \text{for } k \neq i.
\end{align}
\end{proposition}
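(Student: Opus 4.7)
My plan is to observe that the proposition reduces to a direct computation once $F_i$ is written explicitly in terms of the agent positions, since $\mathbf{c}_i$ and $r_i$ are affine, respectively quadratic, functions of $\mathbf{p}_i$ and $\mathbf{e}$. Two of the four identities are essentially free: $\nabla_{\mathbf{q}} F_i = 2(\mathbf{q} - \mathbf{c}_i)$ follows from differentiating $\|\mathbf{q} - \mathbf{c}_i\|^2$ with $\mathbf{c}_i$ held fixed, while $\nabla_{\mathbf{p}_k} F_i = \mathbf{0}$ for $k \neq i$ is immediate because $F_i$ depends only on $\mathbf{p}_i$ and $\mathbf{e}$ through the Apollonius parameters in \eqref{eq:apollonius_i}. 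The work therefore concentrates on the two remaining gradients.

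Next I would rewrite the Apollonius parameters using the speed ratio $\alpha_i = V_e/V_i$. Setting $\beta := 1/(1-\alpha_i^2)$ and $\gamma := \alpha_i^2/(1-\alpha_i^2)$, one obtains
\begin{equation*}
\mathbf{c}_i = \beta\,\mathbf{e} - \gamma\,\mathbf{p}_i, \qquad r_i^2 = \beta\gamma\,\|\mathbf{e} - \mathbf{p}_i\|^2, \qquad \beta - \gamma = 1.
\end{equation*}
Substituting these into $F_i = \|\mathbf{q}\|^2 - 2\mathbf{q}^\top \mathbf{c}_i + \|\mathbf{c}_i\|^2 - r_i^2$ and simplifying is the key algebraic step: the $\mathbf{e}^\top\mathbf{p}_i$ cross terms in $\|\mathbf{c}_i\|^2$ and $r_i^2$ cancel exactly, and the remaining $\|\mathbf{e}\|^2$, $\|\mathbf{p}_i\|^2$ coefficients collapse via $\beta - \gamma = 1$ to yield the clean expansion
\begin{equation*}
F_i(\mathbf{q},t) = \|\mathbf{q}\|^2 - 2\beta\,\mathbf{q}^\top \mathbf{e} + 2\gamma\,\mathbf{q}^\top \mathbf{p}_i + \beta\,\|\mathbf{e}\|^2 - \gamma\,\|\mathbf{p}_i\|^2.
\end{equation*}

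From this expression, both remaining gradients fall out termwise. Differentiating with respect to $\mathbf{p}_i$ gives $2\gamma\,\mathbf{q} - 2\gamma\,\mathbf{p}_i = 2\gamma(\mathbf{q} - \mathbf{p}_i)$, which is the claimed formula after unpacking $\gamma$. Differentiating with respect to $\mathbf{e}$ gives $-2\beta\,\mathbf{q} + 2\beta\,\mathbf{e} = 2\beta(\mathbf{e} - \mathbf{q})$, matching the stated $\nabla_{\mathbf{e}} F_i$. An equivalent route, worth noting as a sanity check, is to apply the vector chain rule via the intermediates $\mathbf{c}_i$ and $r_i^2$: since $\partial\mathbf{c}_i/\partial\mathbf{p}_i = -\gamma I$ and $\partial r_i^2/\partial\mathbf{p}_i = -2\beta\gamma(\mathbf{e}-\mathbf{p}_i)$, one recovers the same result after the $\beta - \gamma = 1$ cancellation.

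There is no deep obstacle here; the potential pitfall is purely bookkeeping. Writing $F_i$ directly in terms of $\mathbf{c}_i(\mathbf{p}_i,\mathbf{e})$ and $r_i(\mathbf{p}_i,\mathbf{e})$ without first exploiting the identity $\beta - \gamma = 1$ produces messy intermediate expressions that seem to depend nontrivially on both agent positions simultaneously. The simplification step hinges precisely on that identity, which is exactly what makes the Apollonius disc the locus of time-equal-arrival points, so invoking it early is the cleanest way to avoid clutter. Once done, the four-part statement is verified by inspection.
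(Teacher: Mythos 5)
Your proof is correct, and all four gradient formulas check out: with $\beta = 1/(1-\alpha_i^2)$ and $\gamma = \alpha_i^2/(1-\alpha_i^2)$ one indeed has $\|\mathbf{c}_i\|^2 - r_i^2 = \beta(\beta-\gamma)\|\mathbf{e}\|^2 + \gamma(\gamma-\beta)\|\mathbf{p}_i\|^2 = \beta\|\mathbf{e}\|^2 - \gamma\|\mathbf{p}_i\|^2$, the $\mathbf{e}^\top\mathbf{p}_i$ cross terms cancel, and the stated gradients follow by termwise differentiation. Your route differs from the paper's in organization: the paper applies the chain rule through the intermediates $\mathbf{c}_i$ and $r_i$, computing $\nabla_{\mathbf{c}_i}F_i = -2(\mathbf{q}-\mathbf{c}_i)$, $\partial F_i/\partial r_i = -2r_i$, and the Jacobians $\mathbf{J}_{\mathbf{p}_i}\mathbf{c}_i = -\gamma\mathbf{I}$, $\nabla_{\mathbf{p}_i}r_i$, etc., and only then substitutes the explicit forms of $\mathbf{c}_i$ and $r_i$ to collapse the result to $\frac{2\alpha_i^2}{1-\alpha_i^2}(\mathbf{q}-\mathbf{p}_i)$ — this is essentially the ``sanity check'' route you sketch at the end. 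Your primary route instead expands $F_i$ once into an explicit quadratic in $(\mathbf{q},\mathbf{e},\mathbf{p}_i)$, exploiting $\beta-\gamma=1$ up front, after which all four identities (including $\nabla_{\mathbf{q}}F_i$ and the vanishing of $\nabla_{\mathbf{p}_k}F_i$ for $k\neq i$) are read off by inspection. The trade-off is modest: your expansion yields a reusable closed form for $F_i$ and makes the cancellation happen exactly once, while the paper's chain-rule decomposition is longer but isolates the individual Jacobians of the Apollonius parameters, which it also reuses elsewhere. Either way the substance — the identity $\beta-\gamma=1$ driving the cancellation — is the same, and your write-up is complete.
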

Proof contained in the Appendix. These expressions provide all the necessary components to evaluate the Leibniz integral.

\subsubsection{Area Gradient Calculation}\label{sectionAreaGradientCalculation}
We can now assemble these components to compute the area gradients. The boundary $\partial\mathcal{S}_e(t)$ is a union of arcs from the active circles $i \in I(t)$, where each arc is parameterized by $\theta \in [\theta_i^{\min}(t), \theta_i^{\max}(t)]$. The Leibniz integral becomes a sum of integrals over these arcs:
\begin{equation}
    \nabla_{\mathbf{z}} A_e(t) = \sum_{i \in I(t)} \int_{\theta_i^{\min}(t)}^{\theta_i^{\max}(t)} -\frac{\nabla_{\mathbf{z}} F_i(\mathbf{q}(\theta),t)}{\|\nabla_{\mathbf{q}} F_i(\mathbf{q}(\theta),t)\|} \, r_i \,d\theta,
\end{equation}
where $\mathbf{z}$ is any agent position vector and $ds = r_i d\theta$. On the boundary arc of circle $i$, we have $\|\mathbf{q}(\theta) - \mathbf{c}_i\| = r_i$, so the denominator becomes $\|\nabla_{\mathbf{q}} F_i\| = \|2(\mathbf{q} - \mathbf{c}_i)\| = 2r_i$.

For the gradient with respect to a pursuer position $\mathbf{p}_k$, we note that $\nabla_{\mathbf{p}_k} F_i = \mathbf{0}$ if $k \neq i$. Thus, only the term for $i=k$ contributes:
\begin{align}
    \nabla_{\mathbf{p}_k} A_e(t) &= \int_{\theta_k^{\min}(t)}^{\theta_k^{\max}(t)} -\frac{\nabla_{\mathbf{p}_k} F_k(\mathbf{q}(\theta),t)}{2r_k} \, r_k \,d\theta \nonumber \\
    &= -\frac{1}{2} \int_{\theta_k^{\min}(t)}^{\theta_k^{\max}(t)} \frac{2\alpha_k^2}{1-\alpha_k^2}(\mathbf{q}(\theta) - \mathbf{p}_k) \,d\theta \nonumber \\
    &= -\frac{\alpha_k^2}{1-\alpha_k^2} \int_{\theta_k^{\min}(t)}^{\theta_k^{\max}(t)} (\mathbf{q}(\theta) - \mathbf{p}_k) \,d\theta. \label{pursuer_integral_rule}
\end{align}
For the gradient with respect to the evader position $\mathbf{e}$, all active arcs contribute:
\begin{align}
    \nabla_{\mathbf{e}} A_e(t) &= \sum_{i \in I(t)} \int_{\theta_i^{\min}(t)}^{\theta_i^{\max}(t)} -\frac{\nabla_{\mathbf{e}} F_i(\mathbf{q}(\theta),t)}{2r_i} \, r_i \,d\theta \nonumber \\
    &= -\frac{1}{2} \sum_{i \in I(t)} \int_{\theta_i^{\min}(t)}^{\theta_i^{\max}(t)} \frac{2}{1-\alpha_i^2}(\mathbf{e} - \mathbf{q}(\theta)) \,d\theta \nonumber \\
    &= -\sum_{i \in I(t)} \frac{1}{1-\alpha_i^2} \int_{\theta_i^{\min}(t)}^{\theta_i^{\max}(t)} (\mathbf{e} - \mathbf{q}(\theta)) \,d\theta. \label{evader_integral_rule}
\end{align}
The problem now reduces to evaluating these vector-valued integrals over the angular intervals $[\theta_i^{\min}, \theta_i^{\max}]$. The integrands $(\mathbf{q}(\theta) - \mathbf{p}_k)$ and $(\mathbf{e} - \mathbf{q}(\theta))$ can be expressed in terms of $\theta$ and agent positions. After performing the integration and simplifying using trigonometric identities, we arrive at a coordinate-independent geometric form for the gradients.

\begin{theorem}[Geometric Form of the Area Gradients]\label{thm:area_gradients}
Let $\mathbf{C}_{\text{arc},i}(t)$ denote the geometric centroid of the arc on the boundary $\partial\mathcal{S}_e(t)$ contributed by an active pursuer $i \in I(t)$. The centroid is the average position of all points on the arc, defined by the line integral:
\begin{equation}
    \mathbf{C}_{\text{arc},i}(t) \triangleq \frac{1}{L_i(t)} \int_{\text{arc}_i} \mathbf{q}(s) \, ds,
\end{equation}
where $L_i(t)$ is the length of the boundary arc contributed by circle $i$ and $\mathbf{q}(s)$ is its position vector parametrized by arc length $s$. 

The area gradients with respect to the agent positions are given in terms of these centroids, yielding direct geometric interpretations for the optimal headings:
\begin{enumerate}
    \item \textbf{Pursuers:} The area gradient with respect to the position of any active pursuer $i \in I(t)$ is
    \begin{equation}\label{p_i_geometric}
        \nabla_{\mathbf{p}_i} A_e(t) = -  (1-\alpha_i^2) r_i(t) L_i(t) \left( \mathbf{C}_{\text{arc},i}(t) - \mathbf{p}_i(t) \right).
    \end{equation}
    Since the optimal heading $\mathbf{u}_i^\star(t)$ is anti-parallel to $\nabla_{\mathbf{p}_i} A_e(t)$, the pursuer's optimal strategy is to move directly towards the centroid $\mathbf{C}_{\text{arc},i}(t)$ of its corresponding boundary arc.

    \item \textbf{Evader:} The area gradient with respect to the evader's position is
    \begin{equation}\label{e_geometric}
        \nabla_{\mathbf{e}} A_e(t) = \sum_{i \in I(t)} \frac{1-\alpha_i^2}{\alpha_i^2}  r_i(t)L_i(t) \left( \mathbf{C}_{\text{arc},i}(t) - \mathbf{e}(t) \right).
    \end{equation}
    Since the optimal heading $\mathbf{v}^\star(t)$ is parallel to $\nabla_{\mathbf{e}} A_e(t)$, the evader's optimal strategy is to move in the direction of a weighted sum of vectors. Each vector in the sum, points from the evader's current position to the centroid of an active boundary arc, $\mathbf{C}_{\text{arc},i}(t)$.
\end{enumerate}

The centroid of each active arc $i \in I(t)$ can be computed explicitly by:
\begin{equation}
    \mathbf{C}_{\text{arc},i}(t) = \mathbf{c}_i(t) + r_i(t)\left( \frac{\ell_i(t)}{L_i(t)} \right) \mathbf{u}_i(t),
\end{equation}
where the terms are defined by the geometry of the Apollonius circle and its active arc:
\begin{itemize}
    \item[\textbullet] $\mathbf{c}_i(t) = \frac{\mathbf{e}(t) - \alpha_i^2 \mathbf{p}_i(t)}{1-\alpha_i^2}$ and $r_i(t) = \frac{\alpha_i}{1-\alpha_i^2}\|\mathbf{e}(t) - \mathbf{p}_i(t)\|$ are the center and radius of the Apollonius circle.
    \item[\textbullet] $L_i(t) = r_i(t) \Delta\theta_i$ is the arc length, where $\Delta\theta_i = \theta_i^{\max}(t) - \theta_i^{\min}(t)$.
    \item[\textbullet] $\ell_i(t) = 2r_i(t) \sin(\Delta\theta_i/2)$ is the length of the chord connecting the endpoints of that arc.
    \item[\textbullet] $\mathbf{u}_i(t) = (\cos m_i(t), \sin m_i(t))^\top$ is the unit vector pointing from the center $\mathbf{c}_i(t)$ to the midpoint of the chord, aligned with the angle $m_i(t) = (\theta_i^{\max}(t) + \theta_i^{\min}(t))/2$.
\end{itemize}
\end{theorem}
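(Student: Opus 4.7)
My plan is to start directly from the integral formulas \eqref{pursuer_integral_rule} and \eqref{evader_integral_rule}, since they already split each gradient into contributions localized to individual active arcs. All that remains is an elementary geometric reduction: evaluate the per-arc line integral in terms of the arc centroid, and then derive the closed form for the centroid itself via a standard trigonometric computation.

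First, I would parametrize each active arc by its angle as $\mathbf{q}(\theta) = \mathbf{c}_i(t) + r_i(t)(\cos\theta,\sin\theta)^\top$ for $\theta \in [\theta_i^{\min},\theta_i^{\max}]$, so that $ds = r_i\,d\theta$ and $L_i = r_i \Delta\theta_i$. The centroid definition then yields
\[
\int_{\theta_i^{\min}}^{\theta_i^{\max}} \mathbf{q}(\theta)\,d\theta \;=\; \frac{L_i}{r_i}\,\mathbf{C}_{\text{arc},i},
\]
while $\int d\theta = L_i/r_i$ handles the constant vectors $\mathbf{p}_i$ and $\mathbf{e}$ in the integrands. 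Substituting these two identities into \eqref{pursuer_integral_rule} and \eqref{evader_integral_rule} collapses the right-hand sides into scalar multiples of $(\mathbf{C}_{\text{arc},i} - \mathbf{p}_i)$ and $(\mathbf{C}_{\text{arc},i} - \mathbf{e})$, respectively, which after regrouping the speed-ratio prefactors from Proposition~\ref{prop:level_set_gradients} matches \eqref{p_i_geometric} and \eqref{e_geometric}.

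Next I would derive the explicit formula for $\mathbf{C}_{\text{arc},i}$. The trigonometric integral $\int_{\theta_i^{\min}}^{\theta_i^{\max}} (\cos\theta,\sin\theta)^\top d\theta$ evaluates elementarily, and the sum-to-product identities with $m_i = (\theta_i^{\max}+\theta_i^{\min})/2$ factor both components as $2\sin(\Delta\theta_i/2)$ times $\cos m_i$ or $\sin m_i$. Recognizing $2r_i \sin(\Delta\theta_i/2)$ as the chord length $\ell_i$ and $(\cos m_i,\sin m_i)^\top$ as the midpoint-direction unit vector $\mathbf{u}_i$ produces the claimed expression $\mathbf{C}_{\text{arc},i} = \mathbf{c}_i + r_i(\ell_i/L_i)\mathbf{u}_i$.

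The main obstacle will not be the integration itself, which is routine, but the algebraic bookkeeping of how the $1/(1-\alpha_i^2)$ and $\alpha_i^2/(1-\alpha_i^2)$ factors inherited from Proposition~\ref{prop:level_set_gradients} combine with the $L_i/r_i$ ratio arising from the integration, so that the final coefficients match the $(1-\alpha_i^2)\, r_i L_i$ and $(1-\alpha_i^2)/\alpha_i^2 \cdot r_i L_i$ forms in the statement. What ultimately matters for the geometric interpretation of the Nash headings via Lemma~\ref{Nash_equilibrium} is the positivity of each prefactor; once this is confirmed, the pursuer's optimal direction reduces to $\mathbf{C}_{\text{arc},i}-\mathbf{p}_i$ and the evader's to a positively weighted sum of the vectors $(\mathbf{C}_{\text{arc},i}-\mathbf{e})$ over active arcs, completing the geometric characterization.
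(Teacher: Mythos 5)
Your proposal is correct and follows essentially the same route as the paper's proof: parametrize each active arc by angle, recognize the per-arc integrals in \eqref{pursuer_integral_rule} and \eqref{evader_integral_rule} as $\Delta\theta_i\,(\mathbf{C}_{\text{arc},i}-\mathbf{p}_i)$ and $\Delta\theta_i\,(\mathbf{C}_{\text{arc},i}-\mathbf{e})$, and obtain the explicit centroid formula via the sum-to-product identities. The one caution concerns exactly the ``bookkeeping'' step you flag: the computation yields the prefactor $\alpha_i^2\Delta\theta_i/(1-\alpha_i^2)$, which is strictly positive (all that matters for the optimal headings) but does not literally reduce to the coefficient $(1-\alpha_i^2)\,r_i L_i$ appearing in \eqref{p_i_geometric} --- the two differ by a factor of $\|\mathbf{e}-\mathbf{p}_i\|^2$ --- a discrepancy that the paper's own proof leaves unresolved as well.
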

\begin{proof}
The proof establishes a direct correspondence between the integral definitions of the relevant geometric centroids and the integral forms of the area gradients derived earlier. We prove the results for the pursuers and the evader separately.

\paragraph{Pursuer's Optimal Heading}
Let $\mathbf{C}_{\text{arc},i}(t)$ denote the geometric centroid of the arc contributed by pursuer $i$. By definition, the centroid is the average position of all points on the arc. Using a line integral, this is:
\begin{equation*}
    \mathbf{C}_{\text{arc},i}(t) = \frac{1}{L_i(t)} \int_{\text{arc}_i} \mathbf{q}(s) \, ds,
\end{equation*}
where $L_i(t)$ is the length of the arc. Parametrizing by the angle $\theta$, with $ds = r_i d\theta$ and $L_i(t) = r_i \Delta\theta_i$, we obtain:
\begin{equation*}
    \mathbf{C}_{\text{arc},i}(t) = \frac{1}{r_i \Delta\theta_i} \int_{\theta_i^{\min}}^{\theta_i^{\max}} \mathbf{q}(\theta) \, r_i d\theta = \frac{1}{\Delta\theta_i} \int_{\theta_i^{\min}}^{\theta_i^{\max}} \mathbf{q}(\theta) \, d\theta.
\end{equation*}
The vector pointing from the pursuer's position $\mathbf{p}_i(t)$ to this centroid is $\mathbf{V}_{\text{target},i}(t) = \mathbf{C}_{\text{arc},i}(t) - \mathbf{p}_i(t)$. Since $\mathbf{p}_i(t)$ is constant with respect to the integration variable $\theta$, we can write:
\begin{align*}
    \mathbf{V}_{\text{target},i}(t) &= \frac{1}{\Delta\theta_i} \int_{\theta_i^{\min}}^{\theta_i^{\max}} \mathbf{q}(\theta) \, d\theta - \frac{1}{\Delta\theta_i} \int_{\theta_i^{\min}}^{\theta_i^{\max}} \mathbf{p}_i(t) \, d\theta \\
    &= \frac{1}{\Delta\theta_i} \int_{\theta_i^{\min}}^{\theta_i^{\max}} \bigl(\mathbf{q}(\theta) - \mathbf{p}_i(t)\bigr) \, d\theta. 
\end{align*}
Now, we recall the intermediate result \eqref{pursuer_integral_rule} from the area gradient calculation subsection:
\begin{equation*}
    \nabla_{\mathbf{p}_i} A_e(t) = -\frac{\alpha_i^2}{1-\alpha_i^2} \int_{\theta_i^{\min}(t)}^{\theta_i^{\max}(t)} \bigl(\mathbf{q}(\theta) - \mathbf{p}_i(t)\bigr) \,d\theta. \label{eq:gradient_integral_pursuer}
\end{equation*}
By comparing these results, we establish a direct proportionality:
\begin{equation*}
    \nabla_{\mathbf{p}_i} A_e(t) = -\frac{\alpha_i^2 \Delta\theta_i}{1-\alpha_i^2} \mathbf{V}_{\text{target},i}(t).
\end{equation*}
The optimal pursuer heading $\mathbf{u}_i^\star(t)$ is anti-parallel to $\nabla_{\mathbf{p}_i} A_e(t)$ (since pursuers minimize the area). As $\alpha_i \in (0,1)$ and $\Delta\theta_i > 0$ for an active pursuer, the scalar multiple $\frac{\alpha_i^2 \Delta\theta_i}{1-\alpha_i^2}$ is strictly positive. Therefore, the optimal heading $\mathbf{u}_i^\star(t)$ is parallel to and points in the same direction as $\mathbf{V}_{\text{target},i}(t)$, the vector from the pursuer to the arc centroid.

\paragraph{Evader's Optimal Heading}
The geometric interpretation for the evader's heading follows from a similar analysis. We start with the integral form of the evader's area gradient derived from the Leibniz rule in Eq.~\eqref{evader_integral_rule}:
\begin{equation*}
    \nabla_{\mathbf{e}} A_e(t) = -\sum_{i \in I(t)} \frac{1}{1-\alpha_i^2} \int_{\theta_i^{\min}(t)}^{\theta_i^{\max}(t)} (\mathbf{e}(t) - \mathbf{q}(\theta)) \,d\theta 
\end{equation*}
For each active arc $i \in I(t)$, we define a target vector $\mathbf{V}_{\text{target},e,i}(t)$ pointing from the evader's position $\mathbf{e}(t)$ to the centroid of that arc, $\mathbf{C}_{\text{arc},i}(t)$. Using the same integral definition of the centroid as in the pursuer's case, this vector is:
\begin{align*}
    \mathbf{V}_{\text{target},e,i}(t) &= \mathbf{C}_{\text{arc},i}(t) - \mathbf{e}(t) \\
    &= \frac{1}{\Delta\theta_i} \int_{\theta_i^{\min}}^{\theta_i^{\max}} \mathbf{q}(\theta) \, d\theta - \frac{1}{\Delta\theta_i} \int_{\theta_i^{\min}}^{\theta_i^{\max}} \mathbf{e}(t) \, d\theta \\
    &= \frac{1}{\Delta\theta_i} \int_{\theta_i^{\min}}^{\theta_i^{\max}} \bigl(\mathbf{q}(\theta) - \mathbf{e}(t)\bigr) \, d\theta.
\end{align*}
This provides a direct link between the integral in the gradient expression and the vector to the arc centroid. Substituting this relationship into the formula for the area gradient gives:
\begin{equation*}
    \nabla_{\mathbf{e}} A_e(t) = \sum_{i \in I(t)} \left(\frac{\Delta\theta_i}{1-\alpha_i^2}\right) \mathbf{V}_{\text{target},e,i}(t).
\end{equation*}
The optimal evader heading $\mathbf{u}_e^\star(t)$ is parallel to $\nabla_{\mathbf{e}} A_e(t)$ (since the evader maximizes the area). The expression above shows that this gradient is a vector sum, where each term is a vector pointing from the evader to the centroid of an active boundary arc. For any active arc, $\Delta\theta_i > 0$ and $\alpha_i \in (0,1)$, which ensures that all weighting coefficients $\frac{\Delta\theta_i}{1-\alpha_i^2}$ are strictly positive. 

The precise formulation for $\nabla_{\mathbf{p}_i} A_e(t)$ and $\nabla_{\mathbf{e}} A_e(t)$ found in \eqref{p_i_geometric} and \eqref{e_geometric} is obtained with $\Delta\theta_i = L_i/r_i$ and $r_i = \frac{\alpha_i}{1-\alpha_i^2}\|\mathbf{e}-\mathbf{p}_i\|$. The derivation of the explicit formula for the centroid $\mathbf{C}_{\text{arc},i}(t)$ is provided in the Appendix.
\end{proof}

\begin{figure}[ht]
\centering
\includegraphics[width=1\columnwidth]{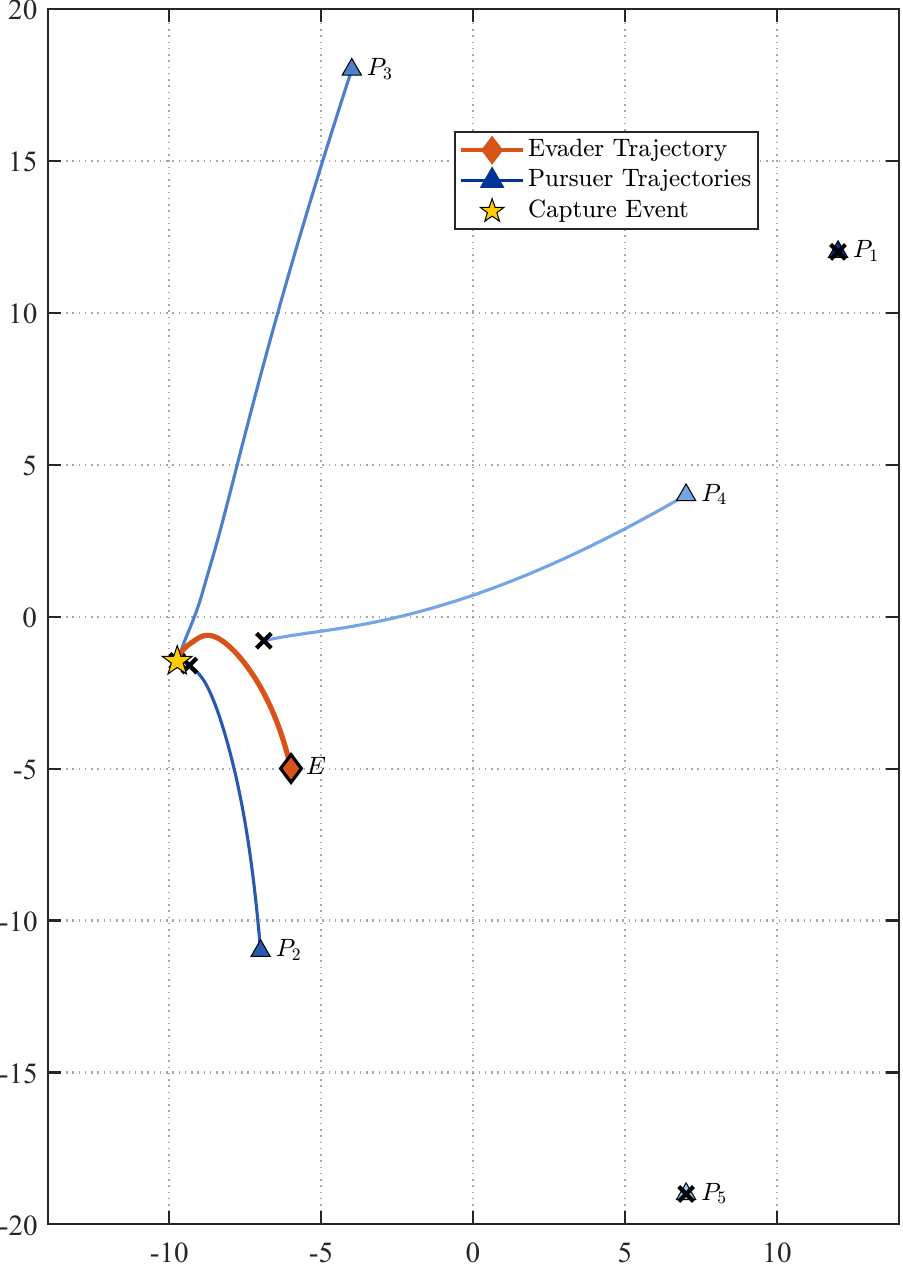}
\caption{Visualizing the Nash equilibrium for the same example scenario shown in Figure \ref{fig:safe_set}. Whenever a pursuer $i$ at time $t$ is inactive ($i \notin I(t)$), the pursuer is stationary; otherwise all agents follow their optimal headings according to Lemma \ref{Nash_equilibrium}.}
\label{fig:nash_equilibrium}
\end{figure}

\section{Conclusion}\label{sec:conclusion}
We generalized the area-based control framework to multi-agent pursuit-evasion games with heterogeneous pursuer speeds, formulating the conflict as a zero-sum game on the area of the evader's safe-reachable set, $\mathcal{S}_e(t)$. Our central contribution is the analytical derivation of the required area gradients. By applying the Leibniz integral rule, we derived these gradients directly from the geometry of the safe set's boundary, bypassing the need for a closed-form expression for the complex region formed by intersecting Apollonius discs, as shown in Figure \ref{fig:safe_set}. The resulting Nash equilibrium strategies, depicted in Figure \ref{fig:nash_equilibrium}, are decentralized, optimal, and geometrically intuitive: each active pursuer moves toward the centroid of its contributed boundary arc.

This work establishes a foundational control strategy for a broad class of pursuit-evasion scenarios. Future research can build upon this framework to incorporate additional real-world complexities, such as environmental obstacles, three-dimensional engagements, and scenarios involving partial information or uncertainty.

\bibliography{aaai2026}

\section{Technical Appendix}
\subsection{Proof of Lemma \ref{lemma_Boundary_Arc_Computation}}\label{Appendix_proof_of_Boundary_Arc_Computation}
\begin{proof}
An arc from an active circle $\partial\mathcal{B}_i$ contributes to the boundary $\partial\mathcal{S}_e$ if and only if it lies within every other active disc $\mathcal{B}_j$ (for $j \in I, j \neq i$). We'll derive the angular range for the arc on $\partial\mathcal{B}_i$ that satisfies the constraint imposed by a single disc $\mathcal{B}_j$, and then find the intersection of these ranges for all $j \neq i$. For brevity, the time dependence $(t)$ is omitted from the notation in this proof.

A point $\mathbf{q}_i$ on the circle $\partial\mathcal{B}_i$ can be parametrized by an angle $\theta$ relative to its center $\mathbf{c}_i$:
$$\mathbf{q}_i(\theta) = \mathbf{c}_i + r_i(\cos\theta\,\hat{\mathbf{x}} + \sin\theta\,\hat{\mathbf{y}}).$$
For this point to lie inside the disc $\mathcal{B}_j$, it must satisfy the condition $\|\mathbf{q}_i(\theta) - \mathbf{c}_j\|^2 \le r_j^2$. Let's analyze the term on the left-hand side:
$$
\begin{aligned}
    \| & \mathbf{q}_i(\theta) - \mathbf{c}_j\|^2 \\
    &= \|(\mathbf{c}_i - \mathbf{c}_j) + r_i(\cos\theta\,\hat{\mathbf{x}} + \sin\theta\,\hat{\mathbf{y}})\|^2 \\
    &= \|\mathbf{c}_i - \mathbf{c}_j\|^2 + r_i^2 + 2r_i(\mathbf{c}_i - \mathbf{c}_j) \cdot (\cos\theta\,\hat{\mathbf{x}} + \sin\theta\,\hat{\mathbf{y}}).
\end{aligned}
$$Let $d_{ij} = \|\mathbf{c}_j - \mathbf{c}_i\|$ be the distance between the centers, and let $\phi_{ij}$ be the angle of the vector from $\mathbf{c}_i$ to $\mathbf{c}_j$, such that $\mathbf{c}_j - \mathbf{c}_i = d_{ij}(\cos\phi_{ij}\,\hat{\mathbf{x}} + \sin\phi_{ij}\,\hat{\mathbf{y}})$. Substituting this gives:$$
\begin{aligned}
    \| & \mathbf{q}_i(\theta) - \mathbf{c}_j\|^2 \\
    &= d_{ij}^2 + r_i^2 - 2r_i(\mathbf{c}_j - \mathbf{c}_i) \cdot (\cos\theta\,\hat{\mathbf{x}} + \sin\theta\,\hat{\mathbf{y}}) \\
    &= d_{ij}^2 + r_i^2 - 2r_id_{ij}(\cos\phi_{ij}\cos\theta + \sin\phi_{ij}\sin\theta).
\end{aligned}
$$Using the identity $\cos(\theta - \phi_{ij}) = \cos\theta\cos\phi_{ij} + \sin\theta\sin\phi_{ij}$, this simplifies to:$$
\|\mathbf{q}_i(\theta) - \mathbf{c}_j\|^2 = d_{ij}^2 + r_i^2 - 2r_id_{ij}\cos(\theta - \phi_{ij}).
$$
Now, imposing the constraint $\|\mathbf{q}_i(\theta) - \mathbf{c}_j\|^2 \le r_j^2$:
$$d_{ij}^2 + r_i^2 - 2r_id_{ij}\cos(\theta - \phi_{ij}) \le r_j^2.$$
Rearranging to solve for $\cos(\theta - \phi_{ij})$ yields:
$$\cos(\theta - \phi_{ij}) \ge \frac{d_{ij}^2 + r_i^2 - r_j^2}{2r_id_{ij}}.$$
The right-hand side is exactly the term defined as $\cos(\alpha_{ij})$ in the lemma statement. The inequality becomes $\cos(\theta - \phi_{ij}) \ge \cos(\alpha_{ij})$, where $\alpha_{ij} \in [0, \pi)$. This holds for angles $\theta$ in the closed interval centered at $\phi_{ij}$:
$$\theta \in [\phi_{ij} - \alpha_{ij}, \phi_{ij} + \alpha_{ij}] \quad (\text{mod } 2\pi).$$
This interval, which we denote $\Gamma_{ij}$, is the set of angles on circle $i$ that lie inside circle $j$. For the point $\mathbf{q}_i(\theta)$ to be on the boundary of the safe-reachable set $\partial\mathcal{S}_e$, it must satisfy this condition for all $j \in I$ with $j \neq i$. Therefore, the total valid angular range for circle $i$, denoted $\Theta_i$, is the intersection of all these individual constraint intervals:
$$\Theta_i = \bigcap_{j \in I, j \neq i} \Gamma_{ij}.$$
Since each $\Gamma_{ij}$ is a compact set (a closed interval on the circle), their intersection $\Theta_i$ is also a compact set. Therefore this intersection, if non-empty, will form a single continuous interval $[\theta_i^{\min}, \theta_i^{\max}]$. This completes the proof. 
\end{proof}

\subsection{Proof of Proposition \ref{prop:level_set_gradients}}\label{appendix:level_set_gradients}
\begin{proof}
Using the chain rule, the gradient of $F_i$ with respect to any agent position vector $\mathbf{z} \in \{\mathbf{p}_1, \dots, \mathbf{p}_N, \mathbf{e}\}$ is:
\begin{equation*}
    \nabla_{\mathbf{z}} F_i = (\mathbf{J}_{\mathbf{z}} \mathbf{c}_i)^\top \nabla_{\mathbf{c}_i} F_i + \frac{\partial F_i}{\partial r_i} \nabla_{\mathbf{z}} r_i,
\end{equation*}
where $\mathbf{J}_{\mathbf{z}} \mathbf{c}_i$ is the Jacobian of $\mathbf{c}_i(\mathbf{z})$. The partial derivatives of $F_i$ with respect to its direct arguments, $\mathbf{c}_i$ and $r_i$, are straightforward:
\begin{equation*}
    \nabla_{\mathbf{c}_i} F_i = -2(\mathbf{q} - \mathbf{c}_i), \qquad \frac{\partial F_i}{\partial r_i} = -2r_i.
\end{equation*}
The derivatives of the Apollonius parameters $\mathbf{c}_i$ and $r_i$ with respect to the agent positions are also required. From Eq.~\eqref{eq:apollonius_i}, we have:
\begin{align*}
    \mathbf{c}_i &= \frac{1}{1-\alpha_i^2}\mathbf{e} - \frac{\alpha_i^2}{1-\alpha_i^2}\mathbf{p}_i, \\
    r_i &= \frac{\alpha_i}{1-\alpha_i^2}\|\mathbf{e} - \mathbf{p}_i\|.
\end{align*}
The gradients of these parameters with respect to $\mathbf{p}_i$ and $\mathbf{e}$ are:
\begin{align*}
    \mathbf{J}_{\mathbf{p}_i} \mathbf{c}_i &= -\frac{\alpha_i^2}{1-\alpha_i^2}\mathbf{I}, & \mathbf{J}_{\mathbf{e}} \mathbf{c}_i &= \frac{1}{1-\alpha_i^2}\mathbf{I}, \\
    \nabla_{\mathbf{p}_i} r_i &= -\frac{\alpha_i}{1-\alpha_i^2} \frac{\mathbf{e} - \mathbf{p}_i}{\|\mathbf{e} - \mathbf{p}_i\|}, & \nabla_{\mathbf{e}} r_i &= \frac{\alpha_i}{1-\alpha_i^2} \frac{\mathbf{e} - \mathbf{p}_i}{\|\mathbf{e} - \mathbf{p}_i\|}.
\end{align*}
For any pursuer $k \neq i$, $\mathbf{c}_i$ and $r_i$ are independent of $\mathbf{p}_k$, so $\nabla_{\mathbf{p}_k} \mathbf{c}_i = \mathbf{0}$ and $\nabla_{\mathbf{p}_k} r_i = \mathbf{0}$.

Combining these results, the gradient of $F_i$ with respect to $\mathbf{p}_i$ is:
\begin{align*}
    \nabla_{\mathbf{p}_i} F_i &= \left(-\frac{\alpha_i^2}{1-\alpha_i^2}\mathbf{I}\right)^\top \left(-2(\mathbf{q} - \mathbf{c}_i)\right) \dots \\
    &+ (-2r_i) \left(-\frac{\alpha_i}{1-\alpha_i^2} \frac{\mathbf{e} - \mathbf{p}_i}{\|\mathbf{e} - \mathbf{p}_i\|}\right) \\
    &= \frac{2\alpha_i^2}{1-\alpha_i^2}(\mathbf{q} - \mathbf{c}_i) + \frac{2\alpha_i r_i}{1-\alpha_i^2} \frac{\mathbf{e} - \mathbf{p}_i}{\|\mathbf{e} - \mathbf{p}_i\|}.
\end{align*}
Substituting the expressions for $\mathbf{c}_i$ and $r_i$ to simplify:
\begin{align*}
    \nabla_{\mathbf{p}_i} F_i &= \frac{2\alpha_i^2}{1-\alpha_i^2}\left(\mathbf{q} - \frac{\mathbf{e} - \alpha_i^2\mathbf{p}_i}{1-\alpha_i^2}\right) + \frac{2\alpha_i}{1-\alpha_i^2}\left(\frac{\alpha_i}{1-\alpha_i^2}\right)(\mathbf{e} - \mathbf{p}_i) \\
    &= \frac{2\alpha_i^2}{(1-\alpha_i^2)^2} \left( (1-\alpha_i^2)\mathbf{q} - \mathbf{e} + \alpha_i^2\mathbf{p}_i + (\mathbf{e} - \mathbf{p}_i) \right) \\
    &= \frac{2\alpha_i^2}{(1-\alpha_i^2)^2} \left( (1-\alpha_i^2)\mathbf{q} - (1-\alpha_i^2)\mathbf{p}_i \right) = \frac{2\alpha_i^2}{1-\alpha_i^2}(\mathbf{q} - \mathbf{p}_i).
\end{align*}
A similar derivation for the gradient with respect to $\mathbf{e}$ yields:
\begin{align*}
    \nabla_{\mathbf{e}} F_i &= \left(\frac{1}{1-\alpha_i^2}\mathbf{I}\right)^\top \left(-2(\mathbf{q} - \mathbf{c}_i)\right) + (-2r_i) \left(\frac{\alpha_i}{1-\alpha_i^2} \frac{(\mathbf{e} - \mathbf{p}_i)}{\|\mathbf{e} - \mathbf{p}_i\|}\right) \\
    &= -\frac{2}{1-\alpha_i^2}(\mathbf{q} - \mathbf{c}_i) - \frac{2\alpha_i r_i}{1-\alpha_i^2} \frac{\mathbf{e} - \mathbf{p}_i}{\|\mathbf{e} - \mathbf{p}_i\|} = \frac{2}{1-\alpha_i^2}(\mathbf{e} - \mathbf{q}).
\end{align*}
Finally, the gradient with respect to the spatial variable $\mathbf{q}$ is simply $\nabla_{\mathbf{q}} F_i = 2(\mathbf{q} - \mathbf{c}_i)$.
\end{proof}

\subsection{Proof of the Arc Centroid Formula}\label{app:centroid_proof}
\begin{proof}
Here, we provide a detailed proof for the explicit formula of the arc centroid $\mathbf{C}_{\text{arc},i}(t)$ presented in Theorem~\ref{thm:area_gradients}.

The centroid of a curve is defined as the vector average of the positions of all points on that curve. For an active arc $i \in I(t)$, which is a segment of a circle, the centroid is given by the line integral:
\begin{equation*}
    \mathbf{C}_{\text{arc},i}(t) \triangleq \frac{1}{L_i(t)} \int_{\text{arc}_i} \mathbf{q}(s) \, ds,
\end{equation*}
where $L_i(t)$ is the length of the arc and $\mathbf{q}(s)$ is the position vector of a point on the arc, parametrized by arc length $s$.

To evaluate this integral, we parameterize the arc using the angle $\theta$. A point $\mathbf{q}(\theta)$ on the circle of pursuer $i$ is given by:
\begin{equation*}
    \mathbf{q}(\theta) = \mathbf{c}_i(t) + r_i(t) \begin{bmatrix} \cos\theta \\ \sin\theta \end{bmatrix},
\end{equation*}
where $\mathbf{c}_i(t)$ is the center and $r_i(t)$ is the radius of the circle. The differential arc length is $ds = r_i(t) d\theta$. The total arc length is $L_i(t) = r_i(t) \Delta\theta_i$, where $\Delta\theta_i = \theta_i^{\max}(t) - \theta_i^{\min}(t)$ is the angular width of the arc.

Substituting the parametrization into the centroid definition gives:
\begin{align*}
    \mathbf{C}_{\text{arc},i}(t) &= \frac{1}{r_i \Delta\theta_i} \int_{\theta_i^{\min}}^{\theta_i^{\max}} \left( \mathbf{c}_i + r_i \begin{bmatrix} \cos\theta \\ \sin\theta \end{bmatrix} \right) r_i \,d\theta \\
    &= \frac{1}{\Delta\theta_i} \int_{\theta_i^{\min}}^{\theta_i^{\max}} \left( \mathbf{c}_i + r_i \begin{bmatrix} \cos\theta \\ \sin\theta \end{bmatrix} \right) d\theta.
\end{align*}
We can split the integral into two parts:
\begin{align*}
    \mathbf{C}_{\text{arc},i}(t) = \frac{1}{\Delta\theta_i} \left( \int_{\theta_i^{\min}}^{\theta_i^{\max}} \mathbf{c}_i \,d\theta + r_i \int_{\theta_i^{\min}}^{\theta_i^{\max}} \begin{bmatrix} \cos\theta \\ \sin\theta \end{bmatrix} d\theta \right).
\end{align*}
The first integral evaluates to $\mathbf{c}_i \Delta\theta_i$. The second integral is:
\begin{align*}
    \int_{\theta_i^{\min}}^{\theta_i^{\max}} \begin{bmatrix} \cos\theta \\ \sin\theta \end{bmatrix} d\theta &= \begin{bmatrix} \sin\theta \\ -\cos\theta \end{bmatrix}_{\theta_i^{\min}}^{\theta_i^{\max}} \\
    &= \begin{bmatrix} \sin(\theta_i^{\max}) - \sin(\theta_i^{\min}) \\ \cos(\theta_i^{\min}) - \cos(\theta_i^{\max}) \end{bmatrix}.
\end{align*}
Combining these results, we have:
\begin{equation*}
    \mathbf{C}_{\text{arc},i}(t) = \mathbf{c}_i + \frac{r_i}{\Delta\theta_i} \begin{bmatrix} \sin(\theta_i^{\max}) - \sin(\theta_i^{\min}) \\ \cos(\theta_i^{\min}) - \cos(\theta_i^{\max}) \end{bmatrix}.
\end{equation*}
To simplify the vector component, we use the trigonometric sum-to-product identities:
\begin{align*}
    \sin(A) - \sin(B) &= 2 \cos\left(\frac{A+B}{2}\right) \sin\left(\frac{A-B}{2}\right) \\
    \cos(B) - \cos(A) &= 2 \sin\left(\frac{A+B}{2}\right) \sin\left(\frac{A-B}{2}\right).
\end{align*}
Let $m_i = (\theta_i^{\max} + \theta_i^{\min})/2$ be the midpoint angle and recall $\Delta\theta_i = \theta_i^{\max} - \theta_i^{\min}$. The vector becomes:
\begin{align*}
    &\begin{bmatrix} 2\cos(m_i)\sin(\Delta\theta_i/2) \\ 2\sin(m_i)\sin(\Delta\theta_i/2) \end{bmatrix} = 2\sin(\frac{\Delta\theta_i}{2}) \begin{bmatrix} \cos(m_i) \\ \sin(m_i) \end{bmatrix}.
\end{align*}
This resulting vector has a clear geometric interpretation. The unit vector $\mathbf{u}_i(t) = [\cos(m_i), \sin(m_i)]^\top$ points from the circle's center $\mathbf{c}_i$ towards the midpoint of the chord connecting the arc's endpoints. The scalar term $2\sin(\Delta\theta_i/2)$ relates to the chord length, $\ell_i(t)$. The chord length is given by $\ell_i(t) = 2r_i \sin(\Delta\theta_i/2)$. Thus, we can write $2\sin(\Delta\theta_i/2) = \ell_i(t)/r_i(t)$.

Substituting these geometric forms back into the expression for the centroid yields:
\begin{align*}
    \mathbf{C}_{\text{arc},i}(t) &= \mathbf{c}_i + \frac{r_i}{\Delta\theta_i} \left( \frac{\ell_i}{r_i} \mathbf{u}_i \right) \\
    &= \mathbf{c}_i + \frac{\ell_i}{\Delta\theta_i} \mathbf{u}_i.
\end{align*}
Finally, by substituting $\Delta\theta_i = L_i/r_i$, we arrive at the desired compact expression for the arc centroid:
\begin{equation*}
    \mathbf{C}_{\text{arc},i}(t) = \mathbf{c}_i(t) + r_i(t)\left( \frac{\ell_i(t)}{L_i(t)} \right) \mathbf{u}_i(t).
\end{equation*}
\end{proof}

\end{document}